\newtheorem{theorem}{Theorem}
\newtheorem{lemma}[theorem]{Lemma}
\theoremstyle{definition}
\newtheorem{definition}[theorem]{Definition}
\newcommand{\tinyspace}{\mspace{1mu}}
\newcommand{\microspace}{\mspace{0.5mu}}
\newcommand{\op}[1]{\operatorname{#1}}
\newcommand{\tr}{\operatorname{Tr}}
\renewcommand{\vec}{\operatorname{vec}}
\newcommand{\fid}{\operatorname{F}}
\renewcommand{\t}{{\scriptscriptstyle\mathsf{T}}}
\newcommand{\abs}[1]{\lvert #1 \rvert}
\newcommand{\Bigabs}[1]{\Bigl\lvert #1 \Bigr\rvert}
\newcommand{\ip}[2]{\langle #1 , #2\rangle}
\newcommand{\bigip}[2]{\bigl\langle #1, #2 \bigr\rangle}
\newcommand{\Bigip}[2]{\Bigl\langle #1, #2 \Bigr\rangle}
\newcommand{\norm}[1]{\lVert\tinyspace#1\tinyspace\rVert}
\newcommand{\bignorm}[1]{\bigl\lVert\tinyspace #1 \tinyspace\bigr\rVert}
\newcommand{\Bignorm}[1]{\Bigl\lVert\tinyspace #1 \tinyspace\Bigr\rVert}
\newcommand{\triplenorm}[1]{
  |\!\microspace|\!\microspace| #1 
  |\!\microspace|\!\microspace|}
\newcommand{\setft}[1]{\mathrm{#1}}
\newcommand{\Lin}{\setft{L}}
\newcommand{\Density}{\setft{D}}
\newcommand{\Herm}{\setft{Herm}}
\newcommand{\Pos}{\setft{Pos}}
\newcommand{\Pd}{\setft{Pd}}
\newcommand{\Unitary}{\setft{U}}
\newcommand{\Trans}{\setft{T}}
\def\complex{\mathbb{C}}
\def\I{\mathds{1}}
\def\X{\mathcal{X}}
\def\Y{\mathcal{Y}}
\def\Z{\mathcal{Z}}
\def\W{\mathcal{W}}
\def\P{\mathcal{P}}
\def\D{\mathcal{D}}
\newenvironment{mylist}[1]{\begin{list}{}{
	\setlength{\leftmargin}{#1}
	\setlength{\rightmargin}{0mm}
	\setlength{\labelsep}{2mm}
	\setlength{\labelwidth}{8mm}
	\setlength{\itemsep}{0mm}}}
	{\end{list}}
\begin{document}

\title{\bf Simpler semidefinite programs for \\ completely bounded norms}
\author{
  {\large John Watrous}\\[2mm]
  {\it Institute for Quantum Computing and School of Computer Science}\\[1mm]
  {\it University of Waterloo}}
\date{August 2, 2012}
\maketitle

\begin{abstract}
  The completely bounded trace and spectral norms, for
  finite-dimensional spaces, are known to be efficiently expressible
  by semidefinite programs (J. Watrous, \emph{Theory of Computing} 5:
  11, 2009).
  This paper presents two new, and arguably much simpler, semidefinite
  programming formulations of these norms.
\end{abstract}

\section{Introduction and preliminary discussion}
\label{sec:introduction}

In the theory of quantum information, \emph{quantum states} are
represented by density operators acting on finite-dimensional complex
vector spaces, while \emph{quantum channels} are represented by linear
mappings that transform one density operator into another
\cite{NielsenC00,KitaevSV02}.
Various concepts connected with mappings of this form, meaning ones
that map linear operators to linear operators (or, equivalently,
that map matrices to matrices), are important in the study of quantum
information for this and other reasons.
Linear mappings of this form are also important in the study of
operator algebras \cite{Paulsen02}.

This paper is concerned specifically with the 
\emph{completely bounded trace and spectral norms},
defined for linear mappings of the form just described.
It is intended as a follow-up paper to \cite{Watrous09}, which
demonstrated that these norms can be efficiently expressed and
computed through the use of semidefinite programming.
Two new semidefinite programming formulations of these norms will be
presented, both of which are simpler than the formulations given in
the previous paper.

A further discussion of the completely bounded trace and spectral
norms can be found in \cite{Watrous09}.
That discussion will not be repeated here---instead, we will proceed
directly to the technical content of the paper, beginning with a short
summary of the notation and basic concepts that are to be assumed.

\subsubsection*{Linear algebra basics}

For a complex vector space of the form $\X = \complex^n$ 
and vectors $u = (u_1,\ldots,u_n)$ and $v = (v_1,\ldots,v_n)$ in $\X$,
we define the inner product
\[
\ip{u}{v} = \sum_{j = 1}^n \overline{u_j} v_j
\]
as well as the Euclidean norm
\[
\norm{u} = \sqrt{\langle u, u\rangle}.
\]
For each $j\in\{1,\ldots,n\}$, the vector $e_j\in\X$ is defined to be
the vector having a 1 in entry $j$ and 0 for all other entries.

Given two complex vector spaces $\X = \complex^n$ and 
$\Y = \complex^m$, we denote the space of all linear mappings (or
\emph{operators}) of the form $A:\X\rightarrow\Y$ as $\Lin(\X,\Y)$,
and identify this space with the collection of all $m\times n$ complex
matrices in the usual way.
For each pair of indices $(i,j)$ we write $E_{i,j}$ to denote the
operator whose matrix representation has a 1 in entry $(i,j)$ and
zeroes in all other entries.
The notation $\Lin(\X)$ is shorthand for $\Lin(\X,\X)$, and the
identity operator on $\X$, which is an element of $\Lin(\X)$,
is denoted $\I_{\X}$.
(The notation $\I$ is sometimes used in place of $\I_{\X}$ when it is
clear that we are referring to the identity operator on $\X$.)

For each operator $A\in\Lin(\X,\Y)$, one defines
$A^{\ast}\in\Lin(\Y,\X)$ to be the unique operator satisfying
$\ip{v}{Au} = \ip{A^{\ast}v}{u}$ for all $u\in\X$ and $v\in\Y$.
As a matrix, $A^{\ast}$ is obtained by taking the conjugate transpose
of the matrix associated with~$A$.
An inner product on $\Lin(\X,\Y)$ is defined as 
\[
\ip{A}{B} = \tr(A^{\ast} B)
\]
for all $A,B\in\Lin(\X,\Y)$.
By identifying a given vector $u\in\X$ with the linear mapping 
$\alpha \mapsto \alpha u$, which is an element of $\Lin(\complex,\X)$,
the mapping $u^{\ast}\in\Lin(\X,\complex)$ is defined.
More explicitly, $u^{\ast}$ is the mapping that satisfies
$u^{\ast} v = \ip{u}{v}$ for all $v\in\X$.

An operator $X\in\Lin(\X)$ is \emph{Hermitian} if $X = X^{\ast}$, and
the set of such operators is denoted $\Herm(\X)$.
An operator $X\in\Lin(\X)$ is \emph{positive semidefinite} if it is
Hermitian and all of its eigenvalues are nonnegative.
The set of such operators is denoted $\Pos(\X)$.
The notation $X\geq 0$ also indicates that $X$ is positive
semidefinite, and more generally the notations $X\leq Y$ and $Y\geq X$
indicate that $Y - X\geq 0$ for Hermitian operators $X$ and $Y$.
An operator $X\in\Lin(\X)$ is \emph{positive definite} if it is
both positive semidefinite and invertible.
Equivalently, $X$ is positive definite if it is Hermitian and all of
its eigenvalues are positive.
The set of such operators is denoted $\Pd(\X)$.
The notation $X>0$ also indicates that $X$ is positive definite, and
the notations $X<Y$ and $Y>X$ indicate that $Y - X> 0$ for
Hermitian operators $X$ and $Y$.
An operator $\rho\in\Lin(\X)$ is a \emph{density operator} if it is
both positive semidefinite and has trace equal to~1, and the set of
such operators is denoted $\Density(\X)$.
Finally, an operator $U\in\Lin(\X)$ is \emph{unitary} if 
$U^{\ast} U = \I_{\X}$, and the set of such operators is denoted
$\Unitary(\X)$.

For $\X = \complex^n$ and $\Y = \complex^m$, the space of all linear
mappings of the form $\Phi:\Lin(\X) \rightarrow \Lin(\Y)$ is denoted
$\Trans(\X,\Y)$.
For each $\Phi\in\Trans(\X,\Y)$, the mapping
$\Phi^{\ast}\in\Trans(\Y,\X)$ is the unique mapping for which the
equation
\[
\ip{Y}{\Phi(X)} = \ip{\Phi^{\ast}(Y)}{X}
\]
holds for all $X\in\Lin(\X)$ and $Y\in\Lin(\Y)$.
A mapping $\Phi\in\Trans(\X,\Y)$ is \emph{Hermiticity preserving} if it
holds that $\Phi(X) \in\Herm(\Y)$ for all choices of $X\in\Herm(\X)$,
\emph{positive} if it holds that $\Phi(X) \in \Pos(\Y)$ for all
$X\in\Pos(\X)$, and \emph{completely positive} if
$\Phi\otimes\I_{\Lin(\complex^k)}$ is positive for all $k\geq 1$.

\subsubsection*{Norms and fidelity}

For $\X = \complex^n$, $\Y = \complex^m$, and any operator
$A\in\Lin(\X,\Y)$, one defines the 
\emph{trace norm}, \emph{Frobenius norm}, and \emph{spectral norm} as
\[
\norm{A}_1 = \tr\sqrt{A^{\ast} A}\:,
\quad
\norm{A}_2 = \sqrt{\ip{A}{A}}\:,
\quad\text{and}\quad
\norm{A}_{\infty} =
\max \bigl\{\norm{Au}\,:\,u\in\X,\,\norm{u} \leq 1\bigr\},
\]
respectively.
These norms correspond precisely to the 1-norm, 2-norm, and
$\infty$-norm of the vector of singular values of $A$.
All three of these norms are \emph{unitarily invariant}, meaning that
\[
\norm{U A V}_1 = \norm{A}_1,
\qquad
\norm{U A V}_2 = \norm{A}_2,
\qquad\text{and}\qquad
\norm{U A V}_\infty = \norm{A}_\infty
\]
for every operator $A\in\Lin(\X,\Y)$ and every choice of unitary
operators $U\in\Unitary(\Y)$ and $V\in\Unitary(\X)$.
For every operator $A\in\Lin(\X,\Y)$ it holds that
\[
\norm{A}_{\infty} \leq \norm{A}_2 \leq \norm{A}_1.
\]
The trace and spectral norms are dual to one another, meaning
\begin{align*}
\norm{A}_1 & = \max\bigl\{\abs{\ip{B}{A}}\,:\,\norm{B}_{\infty} \leq 1\bigr\},\\
\norm{A}_{\infty} & = \max\bigl\{\abs{\ip{B}{A}}\,:\,\norm{B}_1 \leq 1\bigr\},
\end{align*}
for all $A\in\Lin(\X,\Y)$, and with $B$ ranging over operators within
the same space, while the Frobenius norm is self-dual.

For each $\Phi\in\Trans(\X,\Y)$, one defines the \emph{induced} trace
and spectral norms as
\begin{align*}
\norm{\Phi}_1 
& = \max\bigl\{\norm{\Phi(X)}_1 \,:\, X\in\Lin(\X),\,\;
\norm{X}_1\leq 1\bigr\},\\
\norm{\Phi}_\infty
& = \max\bigl\{\norm{\Phi(X)}_\infty \,:\, X\in\Lin(\X),\,\;
\norm{X}_\infty\leq 1\bigr\},
\end{align*}
as well as \emph{completely bounded} variants of these norms:
\begin{align*}
\triplenorm{\Phi}_1 
& = \sup_{k\geq 1}\,
\bignorm{\Phi\otimes\I_{\Lin(\complex^k)}}_1
= \bignorm{\Phi\otimes\I_{\Lin(\X)}}_1,\\
\triplenorm{\Phi}_\infty
& = \sup_{k\geq 1}\,
\bignorm{\Phi\otimes\I_{\Lin(\complex^k)}}_\infty
= \bignorm{\Phi\otimes\I_{\Lin(\Y)}}_\infty.
\end{align*}
By the duality of the trace and spectral norms, it holds that
\begin{equation}
  \label{eq:CB-trace-versus-spectral}
  \triplenorm{\Phi}_1 = \triplenorm{\Phi^{\ast}}_\infty
\end{equation}
for every mapping $\Phi\in\Trans(\X,\Y)$.
In the subsequent sections of the paper, our focus will be on
semidefinite programming formulations of the completely bounded trace
norm $\triplenorm{\cdot}_1$;
interested readers may directly adapt these formulations to ones for
the complete bounded spectral norm by means of the relationship
\eqref{eq:CB-trace-versus-spectral}.
As every operator $X$ having trace norm bounded by 1 can be written as
a convex combination of rank 1 operators taking the form $u v^{\ast}$
for $u$ and $v$ being unit vectors, it follows from the convexity of
norms that
\begin{equation}
  \label{eq:CB-trace-norm-rank-one}
  \triplenorm{\Phi}_1 = 
  \max\bigl\{\bignorm{(\Phi\otimes\I_{\Lin(\X)})(u v^{\ast})}_1\,:\,
    u,v\in\X\otimes\X,\,\norm{u} = \norm{v} = 1\bigr\}.
\end{equation}

Finally, for any two positive semidefinite operators $P,Q\in\Pos(\X)$,
one defines the \emph{fidelity} between $P$ and $Q$ as
\begin{equation} 
  \label{eq:fidelity-definition}
  \fid(P,Q) = \Bignorm{\sqrt{P}\sqrt{Q}}_1.
\end{equation}
For $u,v\in\X\otimes\Y$ being any choice of vectors, it holds that
\begin{equation}
  \label{eq:fidelity-trace-norm}
  \fid\bigl(\tr_{\Y}(u u^{\ast}),\tr_{\Y}(v v^{\ast})\bigr)
  = \norm{\tr_{\X}(v u^{\ast})}_1.
\end{equation}
(It should be noted that the partial traces on the left-hand-side of
the equality in this theorem are taken over the space $\Y$, while
the partial trace on the right-hand-side is taken over $\X$.)
A proof of this identity may be found in \cite{RosgenW05} or
\cite{Watrous08}.

\subsubsection*{Semidefinite programming}

A \emph{semidefinite program}\footnote{
  It should be noted that the above definition differs slightly from
  the one in \cite{Watrous09}, where the equality constraint 
  $\Xi(X) = D$ appears instead as a inequality constraint
  $\Xi(X)\leq D$, and (correspondingly) the dual condition
  $Y\in\Herm(\Y)$ appears as $Y\in\Pos(\Y)$.
  The two forms can easily be converted back and forth, but the one
  above is more convenient for the purposes of this paper.
}
is specified by a triple $(\Xi,C,D)$, where
\begin{mylist}{\parindent}
\item[1.] 
  $\Xi\in\Trans(\X,\Y)$ is a Hermiticity-preserving linear map, and
\item[2.] 
  $C\in\Herm(\X)$ and $D\in\Herm(\Y)$ are Hermitian operators,
\end{mylist}
for $\X = \complex^n$ and $\Y = \complex^m$ denoting spaces as before.
We associate with the triple $(\Xi,C,D)$ two optimization problems,
called the \emph{primal} and \emph{dual} problems, as follows:
\begin{center}
  \begin{minipage}[t]{2.6in}
    \centerline{\underline{Primal problem}}\vspace{-7mm}
    \begin{align*}
      \text{maximize:}\quad & \ip{C}{X}\\
      \text{subject to:}\quad & \Xi(X) = D,\\
      & X\in\Pos(\X).
    \end{align*}
  \end{minipage}
  \hspace*{13mm}
  \begin{minipage}[t]{2.6in}
    \centerline{\underline{Dual problem}}\vspace{-7mm}
    \begin{align*}
      \text{minimize:}\quad & \ip{D}{Y}\\
      \text{subject to:}\quad & \Xi^{\ast}(Y) \geq C,\\
      & Y\in\Herm(\Y).
    \end{align*}
  \end{minipage}
\end{center}

An operator $X\in\Pos(\X)$ satisfying $\Xi(X) = D$ is said to be
\emph{primal feasible}, and an operator $Y\in\Herm(\Y)$ satisfying
$\Xi^{\ast}(Y) \geq C$ is said to be \emph{dual feasible}.
We let $\P$ and $\D$ denote the sets of primal and dual feasible
operators , respectively:
\[
\P = \bigl\{X\in\Pos(\X)\,:\,\Xi(X)=D\bigr\} 
\qquad\text{and}\qquad
\D = \bigl\{Y\in\Herm(\Y)\,:\,\Xi^{\ast}(Y) \geq C\bigr\}.
\]
The linear functions $X \mapsto \ip{C}{X}$ and $Y \mapsto \ip{D}{Y}$
are referred to as the primal and dual \emph{objective functions},
which take real number values all choices of $X\in\P$ and $Y\in\D$
(or, more generally, over all choices of $X\in\Herm(\X)$ and
$Y\in\Herm(\Y)$).
The \emph{primal optimum} and \emph{dual optimum}
are defined as
\[
\alpha = \sup_{X\in\P}\ip{C}{X}
\qquad\text{and}\qquad
\beta = \inf_{Y\in\D}\ip{D}{Y},
\]
respectively.
The values $\alpha$ and $\beta$ may be finite or infinite, and by
convention we define $\alpha = -\infty$ if $\P=\varnothing$ and 
$\beta = \infty$ if $\D=\varnothing$.
If an operator $X\in\P$ satisfies $\ip{C}{X} = \alpha$ we say that $X$
is an \emph{optimal primal solution}, or that $X$ achieves the
primal optimum.
Likewise, if $Y\in\D$ satisfies $\ip{D}{Y} = \beta$ we say that $Y$ is
an \emph{optimal dual solution}, or that $Y$ achieves the dual optimal.

For every semidefinite program it holds that $\alpha \leq \beta$,
which is a fact known as \emph{weak duality}.
The condition $\alpha = \beta$, known as \emph{strong duality}, may
fail to hold for some semidefinite programs---but, for a wide range of
semidefinite programs that arise in practice, strong duality does hold.
The following theorem provides a condition (in both a primal and dual
form) that implies strong duality.

\begin{theorem}[Slater's theorem for semidefinite programs]
  \label{theorem:Slater}
  The following implications hold for every semidefinite program
  $(\Xi,C,D)$.
  \begin{mylist}{\parindent}
  \item[1.]
    If $\P\not=\varnothing$ and there exists a Hermitian operator $Y$
    for which $\Xi^{\ast}(Y) > C$, then $\alpha = \beta$ and there
    exists a primal feasible operator $X\in\P$ for which
    $\ip{C}{X}=\alpha$.
  \item[2.]
    If $\D\not=\varnothing$ and there exists a positive definite
    operator $X>0$ for which $\Xi(X) = D$, then $\alpha = \beta$ and
    there exists a dual feasible operator $Y\in\D$ for which
    $\ip{D}{Y}=\beta$.
  \end{mylist}
\end{theorem}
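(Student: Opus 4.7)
My plan is to derive both implications from a single convex-analytic observation concerning the perturbation function of the primal problem. Weak duality $\alpha \le \beta$ itself is immediate: for any $X \in \P$ and $Y \in \D$ the operator $\Xi^{\ast}(Y) - C$ is positive semidefinite, so
$$0 \;\le\; \ip{\Xi^{\ast}(Y) - C}{X} \;=\; \ip{Y}{\Xi(X)} - \ip{C}{X} \;=\; \ip{D}{Y} - \ip{C}{X}.$$
It therefore suffices, under each hypothesis, to exhibit a feasible point on the opposite side of the duality attaining the opposing optimum.

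For Part 2, let $V = \Xi(\Herm(\X)) \subseteq \Herm(\Y)$, which contains $D$, and consider the perturbation function
$$f(Z) \;=\; \sup \bigl\{ \ip{C}{X} : X \in \Pos(\X),\; \Xi(X) = D + Z \bigr\}, \qquad Z \in V.$$
A convex-combination check shows $f$ is concave; the nonemptiness of $\D$ together with weak duality for perturbed programs gives $f(Z) < \infty$ on all of $V$; and the strict feasibility hypothesis on $X_0$ makes $f$ finite on a neighbourhood of the origin in $V$, since $Z = \Xi(H)$ with $H \in \Herm(\X)$ of sufficiently small norm yields $X_0 + H \in \Pos(\X)$ with $\Xi(X_0 + H) = D + Z$. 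Finite-valued concave functions in finite dimensions are continuous and possess a nonempty superdifferential on the interior of their effective domain, so there exists $Y \in V \subseteq \Herm(\Y)$ with $f(Z) \le f(0) + \ip{Y}{Z}$ for all $Z \in V$. Substituting $Z = \Xi(X) - D$ and using $f(\Xi(X) - D) \ge \ip{C}{X}$ yields
$$\ip{C - \Xi^{\ast}(Y)}{X} \;\le\; \alpha - \ip{D}{Y} \qquad \text{for every } X \in \Pos(\X).$$
Allowing $X$ to scale arbitrarily forces $\Xi^{\ast}(Y) \ge C$, so $Y \in \D$; the choice $X = 0$ then yields $\ip{D}{Y} \le \alpha$, and combining with $\beta \le \ip{D}{Y}$ (from $Y \in \D$) and weak duality collapses the chain to $\alpha = \beta = \ip{D}{Y}$.

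Part 1 follows by a symmetric reduction. Rewriting the dual inequality $\Xi^{\ast}(Y) \ge C$ as an equality via a slack $S \in \Pos(\Y)$ and expressing the unconstrained $Y$ as a difference $Y_+ - Y_-$ of positive semidefinite operators recasts the dual of $(\Xi, C, D)$ as a standard-form primal program in variables $(Y_+, Y_-, S) \in \Pos(\Y) \oplus \Pos(\Y) \oplus \Pos(\Y)$. The hypothesis $\Xi^{\ast}(Y_0) > C$ translates, after shifting $Y_0$ by a large enough multiple of $\I$, to strict primal feasibility of this reformulated program; its dual is the original primal in disguise (and is nonempty by $\P \not= \varnothing$), so Part 2 applied to the reformulation delivers primal attainment and strong duality for $(\Xi, C, D)$.

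The principal obstacle---and the precise point at which Slater's hypothesis is essential---is placing $0$ in the interior of the effective domain of $f$ relative to $V$. Without a positive definite $X_0$, the origin could lie on the relative boundary of that domain, the superdifferential at $0$ could be empty, and any separating hyperplane extracted by a bare Hahn--Banach argument could be ``vertical'' in the sense of pairing trivially against the objective, yielding a $Y$ that satisfies a weakened feasibility condition but fails to encode the optimal value.
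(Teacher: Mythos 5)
The paper states Slater's theorem as known background and gives no proof of it, so there is nothing internal to compare against; what you have written is the standard convex-analytic proof, and it is correct. Part 2 is the classical perturbation-function argument: concavity of $f$, finiteness of $f$ on $V=\Xi(\Herm(\X))$ from the nonemptiness of $\D$ via weak duality, finiteness near the origin from strict primal feasibility, existence of a supergradient $Y$ at $0$, and then the scaling and $X=0$ substitutions that turn the supergradient inequality into dual feasibility and $\ip{D}{Y}\leq\alpha$. Part 1 by recasting the dual in standard primal form with $Y=Y_+-Y_-$ and a slack $S$, shifting by $t\tinyspace\I$ to make the split strictly feasible, and invoking Part 2 is also sound; the reformulated program's dual is indeed the original primal up to sign, so its nonemptiness is exactly $\P\neq\varnothing$ and attainment transfers back as primal attainment. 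Two small points you should make explicit if this were to be written out in full: (i) the step ``$Z\in V$ small implies a small $H\in\Herm(\X)$ with $\Xi(H)=Z$'' uses that $\Xi$ restricted to the orthogonal complement of its kernel is a linear bijection onto $V$ with bounded inverse (true in finite dimensions, but it is the reason the argument does not survive naively in infinite dimensions); and (ii) one should record that $f(0)=\alpha$ is finite --- bounded below because $X_0\in\P$, bounded above because $\D\neq\varnothing$ --- before writing the supergradient inequality. Your closing remark correctly identifies where the Slater hypothesis enters; without it the superdifferential at $0$ can be empty and the separating functional can degenerate, which is exactly how strong duality fails in the known counterexamples.
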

\noindent
The condition that some operator $X>0$ satisfies $\Xi(X) = D$ is
called \emph{strict primal feasibility}, while the condition that some
operator $Y\in\Herm(\Y)$ satisfies $\Xi^{\ast}(Y) > C$ is called
\emph{strict dual feasibility};
in both cases, the ``strictness'' concerns the positive semidefinite
ordering.

\section{A semidefinite program for the maximum output fidelity}
\label{sec:max-output-fidelity}

The first semidefinite programming formulation of the completely
bounded trace norm to be presented is based on a characterization of
the completely bounded trace norm in terms of the fidelity function,
together with a simple semidefinite program for the fidelity function
itself.

\subsection{A semidefinite program for the fidelity function}

We will begin by presenting a semidefinite programming
characterization of the fidelity $\fid(P,Q)$ between two positive
semidefinite operators $P,Q\in\Pos(\X)$, for $\X = \complex^n$.
The same semidefinite programming characterization of the fidelity was
independently discovered by Nathan Killoran \cite{Killoran12}.

The semidefinite program is given by the triple $(\Xi,C,D)$, where
$\Xi:\Lin(\X\oplus\X)\rightarrow\Lin(\X\oplus\X)$ is defined as
\[
\Xi\!
\begin{pmatrix} 
  X_{1,1} & X_{1,2}\\ 
  X_{2,1} & X_{2,2}
\end{pmatrix}
=
\begin{pmatrix} 
  X_{1,1} & 0\\ 
  0 & X_{2,2}
\end{pmatrix}
\]
for all $X_{1,1},X_{1,2},X_{2,1},X_{2,2}\in\Lin(\X)$, and
$C,D\in\Herm(\X\oplus\X)$ are defined as
\[
C = \frac{1}{2}\begin{pmatrix}
  0 & \I\\
  \I & 0
\end{pmatrix}
\quad\quad\text{and}\quad\quad
D = \begin{pmatrix}
  P & 0\\
  0 & Q
\end{pmatrix}.
\]
The primal and dual problems associated with this semidefinite program
may, after some simplifications, be expressed as follows:
\begin{center}
  \begin{minipage}[t]{2.6in}
    \centerline{\underline{Primal problem}}\vspace{-5mm}
    \begin{align*}
      \text{maximize:}\quad & \frac{1}{2}\tr(X) + \frac{1}{2}\tr(X^{\ast}) \\
      \text{subject to:}\quad & 
      \begin{pmatrix}
        P & X\\ X^{\ast} & Q
      \end{pmatrix}\geq 0\\
      & X\in\Lin(\X).
    \end{align*}
  \end{minipage}
  \hspace*{13mm}
  \begin{minipage}[t]{2.6in}
    \centerline{\underline{Dual problem}}\vspace{-5mm}
    \begin{align*}
      \text{minimize:}\quad & \frac{1}{2}\ip{P}{Y}+ \frac{1}{2}\ip{Q}{Z}\\
      \text{subject to:}\quad &
      \begin{pmatrix}
        Y & -\I\\
        -\I & Z
      \end{pmatrix}
      \geq 0\\
      & Y, Z \in\Herm(\X).
    \end{align*}
  \end{minipage}
\end{center}

\subsubsection*{Strong duality}

Strong duality for the semidefinite program $(\Xi,C,D)$ may be
verified through an application of Slater's theorem, using the fact
that the primal problem is feasible and the dual problem is strictly
feasible.
In particular, the operator
\[
\begin{pmatrix}
  P & 0\\
  0 & Q
\end{pmatrix}
\]
is primal feasible, which implies that $\P\not=\varnothing$.
For the dual problem, the operator
\[
\begin{pmatrix}
  \I & 0\\
  0 & \I
\end{pmatrix}
\]
is strictly feasible, as
\[
\Xi^{\ast}\!
\begin{pmatrix}
  \I & 0\\
  0 & \I
\end{pmatrix}
=
\begin{pmatrix}
  \I & 0\\
  0 & \I
\end{pmatrix}
>
\frac{1}{2}
\begin{pmatrix}
  0 & \I\\
  \I & 0
\end{pmatrix}.
\]
By Slater's theorem, we have strong duality, and moreover the primal
optimum is achieved by some choice of a primal feasible operator.

It so happens that strict primal feasibility may fail to hold: if
either of $P$ or $Q$ is not positive definite, it cannot hold that
\[
\begin{pmatrix}
  P & X\\ X^{\ast} & Q
\end{pmatrix}> 0.
\]
One cannot conclude from this fact that the optimal dual value will
not be achieved---but indeed this is the case for some choices of $P$
and $Q$.
If $P$ and $Q$ are positive definite, however, then strict primal
feasibility does hold, and the existence of an optimal dual solution
follows from Slater's theorem.

\subsubsection*{Optimal value}

One may prove that the optimal value of the semidefinite program
described above is equal to $\fid(P,Q)$ by making use of the following
fact (stated as Theorem IX.5.9 in \cite{Bhatia97}).

\begin{lemma}
  \label{lemma:block-positive}
  Let $P,Q\in\Pos(\complex^n)$ be positive semidefinite operators and
  let $X\in\Lin(\complex^n)$ be any operator.
  It holds that
  \begin{equation}
    \label{eq:block-positive-lemma}
    \begin{pmatrix}
      P & X\\
      X^{\ast} & Q
    \end{pmatrix}
    \in \Pos(\complex^n\oplus\complex^n)
  \end{equation}
  if and only if $X = \sqrt{P}K\sqrt{Q}$ for $K\in\Lin(\complex^n)$
  satisfying $\norm{K}_{\infty} \leq 1$.
\end{lemma}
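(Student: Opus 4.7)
The plan is to derive both directions from the factorization
\[
\begin{pmatrix} \sqrt{P} & 0 \\ 0 & \sqrt{Q} \end{pmatrix}
\begin{pmatrix} \I & K \\ K^{\ast} & \I \end{pmatrix}
\begin{pmatrix} \sqrt{P} & 0 \\ 0 & \sqrt{Q} \end{pmatrix}
=
\begin{pmatrix} P & \sqrt{P} K \sqrt{Q} \\ \sqrt{Q} K^{\ast} \sqrt{P} & Q \end{pmatrix},
\]
which is just a direct block multiplication and encapsulates the content of the lemma.

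For the ``if'' direction, assume $X = \sqrt{P} K \sqrt{Q}$ with $\norm{K}_{\infty} \leq 1$. Conjugation by a Hermitian operator preserves positive semidefiniteness, so in light of the factorization it suffices to show that $\begin{pmatrix} \I & K \\ K^{\ast} & \I \end{pmatrix}$ is positive semidefinite whenever $\norm{K}_{\infty} \leq 1$. This is immediate from evaluating the associated quadratic form on a vector $(u,v) \in \complex^n \oplus \complex^n$, bounding $\abs{\ip{u}{Kv}} \leq \norm{u} \norm{v}$, and recognizing the resulting lower bound as $(\norm{u} - \norm{v})^2 \geq 0$.

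For the ``only if'' direction, the idea is to extract a suitable contraction $K$ from any square-root factorization of the block matrix. Given that $\begin{pmatrix} P & X \\ X^{\ast} & Q \end{pmatrix}$ is positive semidefinite, I would write it as $M^{\ast} M$ and partition $M$ column-wise as $M = \begin{pmatrix} A & B \end{pmatrix}$ with $A, B \in \Lin(\complex^n, \complex^n \oplus \complex^n)$, yielding $P = A^{\ast} A$, $Q = B^{\ast} B$, and $X = A^{\ast} B$. From $A^{\ast} A = P = (\sqrt{P})^{\ast} \sqrt{P}$ together with the identity $\norm{Av}^2 = \ip{v}{Pv} = \norm{\sqrt{P} v}^2$, one can construct a partial isometry $W_A : \complex^n \to \complex^n \oplus \complex^n$ satisfying $A = W_A \sqrt{P}$ by declaring $W_A (\sqrt{P} v) = A v$ on the range of $\sqrt{P}$ (where this is well-defined and norm-preserving) and extending by zero on the orthogonal complement. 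An analogous construction gives $W_B$ with $B = W_B \sqrt{Q}$. Setting $K = W_A^{\ast} W_B \in \Lin(\complex^n)$ then produces $X = A^{\ast} B = \sqrt{P} K \sqrt{Q}$ with $\norm{K}_{\infty} \leq \norm{W_A^{\ast}}_{\infty} \norm{W_B}_{\infty} \leq 1$.

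The main obstacle is accommodating the possibility that $P$ or $Q$ is singular: the naive choice $K = P^{-1/2} X Q^{-1/2}$, which would instantly work if $P$ and $Q$ were positive definite, is undefined in general. The partial-isometry construction above handles the singular and invertible cases uniformly; an alternative would be an $\varepsilon$-perturbation $P \mapsto P + \varepsilon \I$, $Q \mapsto Q + \varepsilon \I$ combined with compactness of the unit ball of operators to extract a limiting $K$.
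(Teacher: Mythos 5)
Your proof is correct, and there is nothing in the paper to compare it against: the lemma is imported verbatim as Theorem~IX.5.9 of Bhatia's \emph{Matrix Analysis}, and no proof is given in the text. Both halves of your argument check out. For the ``if'' direction, the congruence identity you state is a correct block multiplication, and since $S^{\ast} M S \geq 0$ whenever $M \geq 0$ (for \emph{any} $S$, Hermitian or not), it reduces the claim to the positivity of $\bigl(\begin{smallmatrix}\I & K\\ K^{\ast} & \I\end{smallmatrix}\bigr)$ for a contraction $K$, which your quadratic-form bound $\norm{u}^2+\norm{v}^2+2\Re\ip{u}{Kv}\geq(\norm{u}-\norm{v})^2$ establishes. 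For the ``only if'' direction, the Gram factorization $P=A^{\ast}A$, $Q=B^{\ast}B$, $X=A^{\ast}B$ followed by the Douglas-style extraction of partial isometries $W_A$, $W_B$ with $A=W_A\sqrt{P}$ and $B=W_B\sqrt{Q}$ is exactly the right device for handling singular $P$ or $Q$; the only point requiring care is the well-definedness and isometry of $W_A$ on the range of $\sqrt{P}$, which follows from $\norm{Av}=\bignorm{\sqrt{P}v}$ as you note, and then $K=W_A^{\ast}W_B$ satisfies $\norm{K}_{\infty}\leq 1$ and $X=\sqrt{P}K\sqrt{Q}$. The proof is complete and self-contained, which is arguably preferable to the paper's bare citation.
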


\noindent
It follows from this lemma that for feasible solutions to the primal
problem, the variable $X\in\Lin(\X)$ (in the simplified form of the
primal problem) is free to range precisely over those operators given
by $\sqrt{P}K\sqrt{Q}$ for $K\in\Lin(\X)$ satisfying
$\norm{K}_{\infty} \leq 1$.
The primal optimum is therefore given by
\begin{multline*}
  \qquad
  \sup_{K}
  \biggl(\frac{1}{2}\tr\Bigl(\sqrt{P} K \sqrt{Q}\Bigr)
  + \frac{1}{2}\tr\Bigl(\sqrt{Q} K^{\ast} \sqrt{P}\Bigr)
  \biggr)
  =
  \sup_{K}
  \Re\Bigl(\tr\Bigl(\sqrt{Q} K^{\ast} \sqrt{P}\Bigr)\Bigr)\\
  =
  \sup_{K}\,
  \Bigabs{\Bigl(\tr\Bigl(\sqrt{Q} K^{\ast} \sqrt{P}\Bigr)\Bigr)}
  =
  \sup_{K}\,
  \Bigabs{\Bigip{K}{\sqrt{P}\sqrt{Q}}}
  =
  \Bignorm{\sqrt{P}\sqrt{Q}}_1
  = \fid(P,Q),
  \qquad
\end{multline*}
where each supremum is over the set
$\{K\in\Lin(\X)\,:\,\norm{K}_{\infty}\leq 1\}$.

By strong duality, the dual optimum is also equal to $\fid(P,Q)$.
An alternate way to prove this fact begins with the observation that
the dual optimum is equal to
\begin{equation} \label{eq:fidelity-dual-simplified}
  \inf_{Y\in\Pd(\X)}
  \biggl(\frac{1}{2}\bigip{P}{Y}+ \frac{1}{2}\bigip{Q}{Y^{-1}}\biggr)
\end{equation}
This expression follows from the observation that, for every
$Y,Z\in\Herm(\X)$, it holds that
\[
\begin{pmatrix} 
  Y & -\I\\
  -\I & Z 
\end{pmatrix}
\in\Pos(\X\otimes\X)
\]
if and only if $Y,Z\in\Pd(\X)$ and $Z \geq Y^{-1}$, together with the
assumption that $Q$ is positive semidefinite.
Now, the fact that the dual optimum is equal to $\fid(P,Q)$ follows
from a theorem known as \emph{Alberti's theorem}.

\begin{theorem}[Alberti] \label{theorem:alberti}
  Let $\X = \complex^n$ and let $P,Q\in\Pos(\X)$ be positive
  semidefinite operators.
  It holds that
  \[
  \left(\fid(P,Q)\right)^2 = \inf_{Y\in\Pd(\X)} \bigip{P}{Y} \bigip{Q}{Y^{-1}}.
  \]
\end{theorem}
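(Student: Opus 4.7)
The plan is to prove the two inequalities $\fid(P,Q)^2 \leq \langle P, Y\rangle \langle Q, Y^{-1}\rangle$ for every $Y \in \Pd(\X)$, and $\fid(P,Q)^2 \geq \inf_{Y \in \Pd(\X)} \langle P, Y\rangle \langle Q, Y^{-1}\rangle$. The first is a Cauchy--Schwarz exercise, while the second is proved by exhibiting an explicit optimizer when $P,Q > 0$ and then taking limits.

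For the upper bound, I would rewrite $\sqrt{P}\sqrt{Q} = A B$ with $A = \sqrt{P}\sqrt{Y}$ and $B = \sqrt{Y^{-1}}\sqrt{Q}$, which is valid since $Y \in \Pd(\X)$. Applying the non-commutative Hölder inequality $\norm{AB}_1 \leq \norm{A}_2 \norm{B}_2$ (which follows from $\norm{AB}_1 = \sup_U \abs{\tr(U^\ast AB)}$ together with Cauchy--Schwarz on the Hilbert--Schmidt inner product), one obtains
\[
  \fid(P,Q) = \Bignorm{\sqrt{P}\sqrt{Q}}_1
  \leq \Bignorm{\sqrt{P}\sqrt{Y}}_2 \Bignorm{\sqrt{Y^{-1}}\sqrt{Q}}_2
  = \sqrt{\tr(PY)}\,\sqrt{\tr(QY^{-1})},
\]
and squaring yields the desired inequality for every $Y \in \Pd(\X)$.

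For the matching lower bound in the special case $P,Q \in \Pd(\X)$, I would simply \emph{guess} the optimizer $Y_0 = P^{-1/2}\bigl(\sqrt{P}\,Q\sqrt{P}\bigr)^{1/2} P^{-1/2}$ and compute. A direct calculation using the cyclic property of the trace gives
\[
\bigip{P}{Y_0} = \tr\Bigl(\bigl(\sqrt{P}\,Q\sqrt{P}\bigr)^{1/2}\Bigr) = \fid(P,Q),
\]
and similarly, using $Y_0^{-1} = \sqrt{P}\bigl(\sqrt{P}\,Q\sqrt{P}\bigr)^{-1/2}\sqrt{P}$,
\[
\bigip{Q}{Y_0^{-1}} = \tr\Bigl(\sqrt{P}\,Q\sqrt{P}\,\bigl(\sqrt{P}\,Q\sqrt{P}\bigr)^{-1/2}\Bigr) = \fid(P,Q),
\]
so that $\langle P, Y_0\rangle \langle Q, Y_0^{-1}\rangle = \fid(P,Q)^2$ exactly.

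The main obstacle (modest, but worth stating) is extending this to arbitrary $P,Q \in \Pos(\X)$, since $Y_0$ is only defined when $P$ is invertible and one still needs to infimize over $Y > 0$. To handle this I would replace $P,Q$ by $P_\varepsilon = P + \varepsilon \I$ and $Q_\varepsilon = Q + \varepsilon \I$, which are positive definite, apply the positive-definite case to obtain $Y_\varepsilon \in \Pd(\X)$ with $\langle P_\varepsilon, Y_\varepsilon\rangle\langle Q_\varepsilon, Y_\varepsilon^{-1}\rangle = \fid(P_\varepsilon,Q_\varepsilon)^2$, and then observe
\[
\bigip{P}{Y_\varepsilon}\bigip{Q}{Y_\varepsilon^{-1}} \leq \bigip{P_\varepsilon}{Y_\varepsilon}\bigip{Q_\varepsilon}{Y_\varepsilon^{-1}} = \fid(P_\varepsilon,Q_\varepsilon)^2.
\]
Continuity of the fidelity in its arguments (immediate from the definition \eqref{eq:fidelity-definition}, since $P \mapsto \sqrt{P}$ is continuous on $\Pos(\X)$) yields $\fid(P_\varepsilon,Q_\varepsilon)^2 \to \fid(P,Q)^2$ as $\varepsilon \to 0$, so $\inf_{Y \in \Pd(\X)} \langle P, Y\rangle \langle Q, Y^{-1}\rangle \leq \fid(P,Q)^2$. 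Combined with the upper bound from the first step, this completes the proof.
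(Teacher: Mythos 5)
Your proof is correct, but it takes a genuinely different route from the paper's. The paper obtains Alberti's theorem as a byproduct of semidefinite programming duality: the primal optimum of the fidelity SDP is shown to equal $\fid(P,Q)$ via Lemma~\ref{lemma:block-positive}, Slater's theorem gives strong duality, and the dual optimum $\inf_{Y}\bigl(\frac{1}{2}\ip{P}{Y}+\frac{1}{2}\ip{Q}{Y^{-1}}\bigr)$ is converted to the geometric-mean form of the theorem by the rescaling/AM--GM argument. You instead give a direct analytic proof: the inequality $\fid(P,Q)^2 \leq \ip{P}{Y}\ip{Q}{Y^{-1}}$ follows from the factorization $\sqrt{P}\sqrt{Q} = \bigl(\sqrt{P}\sqrt{Y}\bigr)\bigl(\sqrt{Y^{-1}}\sqrt{Q}\bigr)$ and H\"older/Cauchy--Schwarz, and the reverse inequality from the explicit optimizer $Y_0 = P^{-1/2}\bigl(\sqrt{P}\,Q\sqrt{P}\bigr)^{1/2}P^{-1/2}$ (for which both inner products indeed equal $\fid(P,Q) = \tr\bigl(\sqrt{P}\,Q\sqrt{P}\bigr)^{1/2}$, as your trace computations show), together with an $\varepsilon$-perturbation to handle singular $P$ or $Q$; the perturbation step is sound because all four inner products involved are nonnegative, so $\ip{P}{Y_\varepsilon}\ip{Q}{Y_\varepsilon^{-1}} \leq \ip{P_\varepsilon}{Y_\varepsilon}\ip{Q_\varepsilon}{Y_\varepsilon^{-1}}$ holds termwise. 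Your approach is more elementary and self-contained---it needs no duality theory and, as a bonus, exhibits the optimal $Y$ explicitly when $P,Q\in\Pd(\X)$, where the paper only guarantees attainment abstractly via Slater's theorem. What the paper's route buys is thematic economy: Alberti's theorem falls out of the same SDP machinery (in particular Lemma~\ref{lemma:block-positive}) on which the rest of the paper is built, with strong duality doing the work that your Cauchy--Schwarz bound and explicit optimizer do by hand.
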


\noindent
To see that Alberti's theorem implies that the expression
\eqref{eq:fidelity-dual-simplified} is equal to $\fid(P,Q)$, note
first that the arithmetic-geometric mean inequality implies
that
\[
\frac{1}{2}\bigip{P}{Y} + \frac{1}{2}\bigip{Q}{Y^{-1}} 
\geq \sqrt{\ip{P}{Y}\ip{Q}{Y^{-1}}}
\]
for every $Y\in\Pd(\X)$, with equality if and only if 
$\ip{P}{Y} = \ip{Q}{Y^{-1}}$.
It follows that
\[
\inf_{Y\in\Pd(\X)}
\biggl(\frac{1}{2}\bigip{P}{Y}+ \frac{1}{2}\bigip{Q}{Y^{-1}}\biggr)
\geq \fid(P,Q).
\]
Moreover, for an arbitrary choice of $Y\in\Pd(\X)$, one may choose
$\lambda > 0$ so that
\[
\bigip{P}{\lambda Y} = \bigip{Q}{(\lambda Y)^{-1}}
\]
and therefore
\[
\frac{1}{2}\bigip{P}{\lambda Y} + \frac{1}{2}\bigip{Q}{(\lambda Y)^{-1}} 
= \sqrt{\ip{P}{\lambda Y}\ip{Q}{(\lambda Y)^{-1}}} = 
\sqrt{\ip{P}{Y}\ip{Q}{Y^{-1}}}.
\]
Thus, 
\[
\inf_{Y\in\Pd(\X)}
\biggl(\frac{1}{2}\bigip{P}{Y}+ \frac{1}{2}\bigip{Q}{Y^{-1}}\biggr)
= \fid(P,Q).
\]

By reversing this argument, an alternate proof of Alberti's theorem
based on semidefinite programming duality is obtained.
A similar observation was made in \cite{Watrous09} based on a
different semidefinite programming formulation of the fidelity.

\subsection{Maximum output fidelity characterization of the completely
  bounded trace norm}

Next, we recall a known characterization of the completely bounded
trace norm in terms of the fidelity function, which makes use of the
following definition.

\begin{definition}
  Let $\X = \complex^n$ and $\Z= \complex^k$, and let
  $\Psi_0,\Psi_1\in\Trans(\X,\Z)$ be positive maps.
  The \emph{maximum output fidelity} between $\Psi_0$ and
  $\Psi_1$ is defined as 
  \[
  \fid_{\textup{max}}(\Psi_0,\Psi_1)
  =
  \max\bigl\{\fid(\Psi_0(\rho_0),\Psi_1(\rho_1))\,:\,
  \rho_0,\rho_1\in\Density(\X)\bigr\}.
  \]
\end{definition}

\noindent
The characterization (which appears as an exercise in
\cite{KitaevSV02} and is a corollary of a slightly more general result
proved in \cite{Watrous08}) is given by the following theorem.

\begin{theorem}
  \label{theorem:CB-trace-fidelity}
  Let $\X = \complex^n$, $\Y= \complex^m$, and $\Z = \complex^k$, let
  $A_0,A_1\in\Lin(\X,\Y\otimes\Z)$ be operators, and let
  $\Psi_0,\Psi_1\in\Trans(\X,\Z)$ and $\Phi\in\Trans(\X,\Y)$ be
  mappings defined by the equations
  \[
  \Psi_0(X) = \tr_{\Y}\bigl(A_0 X A_0^{\ast}\bigr),
  \qquad
  \Psi_1(X) = \tr_{\Y}\bigl( A_1 X A_1^{\ast}\bigr),
  \qquad\text{and}\qquad
  \Phi(X) = \tr_{\Z} \bigl(A_0 X A_1^{\ast}\bigr),
  \]
  for all $X\in\Lin(\X)$.
  It holds that $\triplenorm{\Phi}_1=\fid_{\mathrm{max}}(\Psi_0,\Psi_1)$.
\end{theorem}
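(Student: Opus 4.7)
My plan is to pass through the rank-one characterization \eqref{eq:CB-trace-norm-rank-one} and then apply the fidelity/partial-trace identity \eqref{eq:fidelity-trace-norm} to identify each trace norm on the right-hand side of that characterization with a fidelity of reduced states. Concretely, I would start by fixing unit vectors $u,v\in\X\otimes\X$ and defining
\[
u' = (A_0\otimes\I_{\X})u
\qquad\text{and}\qquad
v' = (A_1\otimes\I_{\X})v,
\]
which live in $\Y\otimes\Z\otimes\X$. The first step is the essentially algebraic verification that
\[
(\Phi\otimes\I_{\Lin(\X)})(uv^{\ast})
= \tr_{\Z}\!\bigl(u'(v')^{\ast}\bigr),
\]
which follows directly by pushing the $A_0$ and $A_1^{\ast}$ outside and then the $\tr_{\Z}$ inside using the defining formula $\Phi(X) = \tr_{\Z}(A_0 X A_1^{\ast})$.

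Next, I would regard $u'$ and $v'$ as vectors in $\Z\otimes(\Y\otimes\X)$ and apply \eqref{eq:fidelity-trace-norm} with the roles of $\X$ and $\Y$ there played by $\Z$ and $\Y\otimes\X$ respectively. Using the fact that the trace norm is invariant under taking adjoints, this yields
\[
\bignorm{(\Phi\otimes\I_{\Lin(\X)})(uv^{\ast})}_1
= \fid\!\bigl(\tr_{\Y\otimes\X}(u'(u')^{\ast}),\,\tr_{\Y\otimes\X}(v'(v')^{\ast})\bigr).
\]
The key recognition step is then that the two arguments of this fidelity are precisely $\Psi_0(\rho_0)$ and $\Psi_1(\rho_1)$, where $\rho_0 = \tr_{\X}(uu^{\ast})$ and $\rho_1 = \tr_{\X}(vv^{\ast})$ are the reduced density operators of $u$ and $v$. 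This holds because the partial trace factors as $\tr_{\Y\otimes\X} = \tr_{\Y}\circ\tr_{\X}$, and $\tr_{\X}(u'(u')^{\ast}) = A_0\rho_0 A_0^{\ast}$ by a direct calculation (and likewise for $v'$, with $A_1$ and $\rho_1$), so applying the remaining $\tr_{\Y}$ reproduces the definitions of $\Psi_0(\rho_0)$ and $\Psi_1(\rho_1)$.

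Combining these steps with \eqref{eq:CB-trace-norm-rank-one} gives
\[
\triplenorm{\Phi}_1
= \max_{u,v}\fid\bigl(\Psi_0(\rho_0),\Psi_1(\rho_1)\bigr),
\]
and the proof is finished by the standard purification observation: as $u$ ranges over unit vectors in $\X\otimes\X$, the reduced state $\rho_0=\tr_{\X}(uu^{\ast})$ ranges over all of $\Density(\X)$, and independently the same holds for $v$ and $\rho_1$. Hence the maximum over $(u,v)$ coincides with the maximum over $(\rho_0,\rho_1)\in\Density(\X)\times\Density(\X)$, which is exactly $\fid_{\text{max}}(\Psi_0,\Psi_1)$.

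I expect the only real obstacle to be bookkeeping: one has three copies of spaces ($\Y$, $\Z$, and the ancillary $\X$) floating around, and the fidelity/partial-trace identity \eqref{eq:fidelity-trace-norm} has to be applied to the correct bipartition. The rest is routine manipulation of partial traces and an appeal to purification.
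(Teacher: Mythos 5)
Your proposal is correct and follows essentially the same route as the paper's own proof: pass through the rank-one characterization \eqref{eq:CB-trace-norm-rank-one}, verify the algebraic identity $(\Phi\otimes\I_{\Lin(\X)})(uv^{\ast})=\tr_{\Z}(u'(v')^{\ast})$, apply \eqref{eq:fidelity-trace-norm} to the bipartition $\Z$ versus $\Y\otimes\X$, identify the reduced states with $\Psi_0(\rho_0)$ and $\Psi_1(\rho_1)$, and finish by the purification argument. The only cosmetic difference is that the paper names the ancilla space $\W$ rather than reusing $\X$.
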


\begin{proof}
  For $\W = \complex^n$ and any choice of vectors
  $u_0,u_1\in\X\otimes\W$, one has
  \begin{align*}
    \tr_{\Y\otimes\W} 
    \bigl( (A_0\otimes \I_{\W}) 
    u_0 u_0^{\ast} (A_0\otimes \I_{\W})^{\ast}\bigr)
    & = \Psi_0\bigl(\tr_{\W}\bigl(u_0 u_0^{\ast}\bigr)\bigr),\\
    \tr_{\Y\otimes\W} 
    \bigl( (A_1\otimes \I_{\W}) 
    u_1 u_1^{\ast} (A_1\otimes \I_{\W})^{\ast}\bigr)
    & = \Psi_1\bigl(\tr_{\W}\bigl(u_1 u_1^{\ast}\bigr)\bigr),
  \end{align*}
  and therefore, by \eqref{eq:fidelity-trace-norm}, it holds that
  \[
  \bignorm{\tr_{\Z}\bigl((A_0\otimes \I_{\W})u_0 
    u_1^{\ast}(A_1\otimes\I_{\W})^{\ast}\bigr)}_1
  = \fid\bigl(\Psi_0(\tr_{\W}(u_0 u_0^{\ast})),
  \Psi_1(\tr_{\W}(u_1 u_1^{\ast}))\bigr).
  \]
  Consequently
  \begin{align*}
    \triplenorm{\Phi}_1 
    & =
    \op{max}
    \bigl\{
    \bignorm{\tr_{\Z}\bigl((A_0\otimes \I_{\W})u_0 
      u_1^{\ast}(A_1\otimes\I_{\W})^{\ast}\bigr)}_1
    \,:\,
    u_0, u_1 \in \X\otimes\W,\:
    \norm{u_0} = \norm{u_1} = 1
    \bigr\}\\
    & =
    \op{max}
    \bigl\{
    \fid\bigl(
    \Psi_0(\tr_{\W}(u_0 u_0^{\ast})),
    \Psi_1(\tr_{\W}(u_1 u_1^{\ast}))
    \bigl)\,:\,
    u_0, u_1 \in \X\otimes\W,\:
    \norm{u_0} = \norm{u_1} = 1
    \bigr\}\\
    & =
    \op{max}
    \bigl\{
    \fid(\Psi_0(\rho_0),\Psi_1(\rho_1))\,:\,
    \rho_0,\rho_1\in\Density(\X)\bigr\}\\
    & = \fid_{\mathrm{max}}(\Psi_0,\Psi_1)
  \end{align*}
  as required.
\end{proof}

\subsection{A semidefinite program for the maximum output fidelity}

Theorem~\ref{theorem:CB-trace-fidelity}, when combined with the
semidefinite program for the fidelity discussed at the beginning of
the present section, leads to a semidefinite program for the
completely bounded trace norm, as is now described.

Let $\X=\complex^n$ and $\Y=\complex^m$, and suppose that a mapping
$\Phi\in\Trans(\X,\Y)$ is given as
\begin{equation}
  \Phi(X) = \tr_{\Z} \bigl(A_0 X A_1^{\ast}\bigr)
\end{equation}
for all $X\in\Lin(\X)$, where $\Z = \complex^k$ and
$A_0,A_1\in\Lin(\X,\Y\otimes\Z)$ are operators.
An expression of this form is sometimes known as a
\emph{Stinespring representation} of $\Phi$, and such a representation
always exists (provided that $k$ is sufficiently large; $k$ must be
at least $mn$ in the worst case).

Now, define completely positive mappings
$\Psi_0,\Psi_1\in\Trans(\X,\Z)$ as
\[
\Psi_0(X) = \tr_{\Y}\bigl(A_0 X A_0^{\ast}\bigr)
\qquad\text{and}\qquad
\Psi_1(X) = \tr_{\Y}\bigl( A_1 X A_1^{\ast}\bigr)
\]
for all $X\in\Lin(\X)$.
The semidefinite program to be considered is specified by the triple
$(\Xi,C,D)$, where $\Xi : \Lin(\X \oplus \X \oplus \Z \oplus \Z) \rightarrow
\Lin(\complex \oplus \complex \oplus \Z \oplus \Z)$ is a
Hermiticity-preserving mapping defined as
\begin{equation}
  \Xi\begin{pmatrix}
  X_0 & \cdot & \cdot & \cdot\\
  \cdot & X_1 & \cdot & \cdot\\
  \cdot & \cdot & Z_0 & \cdot\\
  \cdot & \cdot & \cdot & Z_1
  \end{pmatrix}
  =
  \begin{pmatrix}
    \tr(X_0) & 0 & 0 & 0\\
    0 & \tr(X_1) & 0 & 0\\
    0 & 0 & Z_0 - \Psi_0(X_0) & 0\\
    0 & 0 & 0 & Z_1 - \Psi_1(X_1)
  \end{pmatrix},
\end{equation}
where dots represent operators on appropriately chosen spaces upon
which $\Xi$ does not depend, and 
$C\in\Herm(\X \oplus \X \oplus \Z \oplus \Z)$ and
$D\in\Herm(\complex\oplus\complex\oplus\Z\oplus\Z)$ are defined as
\begin{equation}
  C = \frac{1}{2}\begin{pmatrix}
    0 & 0 & 0 & 0\\
    0 & 0 & 0 & 0\\
    0 & 0 & 0 & \I\\
    0 & 0 & \I & 0
  \end{pmatrix}
\qquad\text{and}\qquad
  D = \begin{pmatrix}
    1 & 0 & 0 & 0\\
    0 & 1 & 0 & 0\\
    0 & 0 & 0 & 0\\
    0 & 0 & 0 & 0
  \end{pmatrix}.
\end{equation}
The adjoint of the mapping $\Xi$ is given by
\[
\Xi^{\ast}
\begin{pmatrix}
  \lambda_0 & \cdot & \cdot & \cdot\\
  \cdot & \lambda_1 & \cdot & \cdot\\
  \cdot & \cdot & Y_0 & \cdot\\
  \cdot & \cdot & \cdot & Y_1
\end{pmatrix}
=
\begin{pmatrix}
  \lambda_0 \I_{\X} - \Psi_0^{\ast}(Y_0) & 0 & 0 & 0\\
  0 & \lambda_1 \I_{\X} - \Psi_1^{\ast}(Y_1) & 0 & 0\\
  0 & 0 & Y_0 & 0\\
  0 & 0 & 0 & Y_1
\end{pmatrix}.
\]
After a simplification of the primal and dual problems associated with
$(\Xi,C,D)$, one obtains equivalent primal and dual problems as follows:
\begin{center}
  \begin{minipage}[t]{3in}
    \centerline{\underline{Primal problem}}\vspace{-5mm}
    \begin{align*}
    \text{maximize:}\quad &
    \frac{1}{2}\tr(X) + \frac{1}{2}\tr(X^{\ast})\\[1mm]
    \text{subject to:}\quad & 
    \begin{pmatrix}
      \Psi_0(\rho_0) & X\\[2mm]
      X^{\ast} & \Psi_1(\rho_1)
    \end{pmatrix}
    \geq 0\\
    & \rho_0,\rho_1\in\Density(\X)\\
    & X\in\Lin(\Z).
    \end{align*}
  \end{minipage}
  \begin{minipage}[t]{3in}
    \centerline{\underline{Dual problem}}\vspace{-5mm}
    \begin{align*}
    \text{minimize:}\quad &
    \frac{1}{2}\bignorm{\Psi_0^{\ast}(Y)}_{\infty} +
    \frac{1}{2}\bignorm{\Psi_1^{\ast}(Y^{-1})}_{\infty}\\[1mm]
    \text{subject to:}\quad &
    Y \in \Pd(\Z).
    \end{align*}
  \end{minipage}
\end{center}

\subsubsection*{Strong duality}

To prove that strong duality holds for the semidefinite program above,
it suffices to prove that the primal problem is feasible and the dual
problem is strictly feasible.
Primal feasibility is easily checked:
one may verify that the operator
\[
\begin{pmatrix}
  \rho_0 & 0 & 0 & 0\\
  0 & \rho_1 & 0 & 0\\
  0 & 0 & \Psi_0(\rho_0) & 0\\
  0 & 0 & 0 & \Psi_1(\rho_1)
\end{pmatrix}
\]
is primal feasible for any choice of density operators
$\rho_0,\rho_1\in\Density(\X)$.
To verify that strict dual feasibility holds, one may consider the
operator
\[
\begin{pmatrix}
  \lambda_0 & 0 & 0 & 0\\
  0 & \lambda_1 & 0 & 0\\
  0 & 0 & \I_{\Z} & 0\\
  0 & 0 & 0 & \I_{\Z}
\end{pmatrix}
\]
for any choice of real numbers 
$\lambda_0 > \norm{\Psi_0^{\ast}(\I_{\Z})}_{\infty}$ and
$\lambda_1 > \norm{\Psi_1^{\ast}(\I_{\Z})}_{\infty}$.
By Slater's theorem, strong duality follows.

\subsubsection*{Optimal value}

For any fixed choice of $\rho_0,\rho_1\in\Density(\X)$, one has that
the maximum value of the primal objective function
\[
\frac{1}{2}\tr(X) + \frac{1}{2}\tr(X^{\ast})
\]
subject to the constraint
\[
\begin{pmatrix}
  \Psi_0(\rho_0) & X\\[2mm]
  X^{\ast} & \Psi_1(\rho_1)
\end{pmatrix}
\geq 0
\]
is equal to $\fid(\Psi_0(\rho_0),\Psi_1(\rho_1))$, by the same analysis
that was used to determine the primal optimum for the semidefinite
program for the fidelity function.
Maximizing over all choices of density operators
$\rho_0,\rho_1\in\Density(\X)$ gives
$\fid_{\textup{max}}(\Psi_0,\Psi_1)$, which equals
$\triplenorm{\Phi}_1$ by Theorem~\ref{theorem:CB-trace-fidelity}.

\section{A semidefinite program for the completely bounded trace norm
  from a mapping's Choi-Jamio{\l}kowski representation}
\label{sec:Choi}

In this section an alternate semidefinite program for the completely
bounded trace norm is presented.
Whereas the semidefinite program from the previous section is obtained
from a Stinespring representation of a given mapping, the semidefinite
program in this section is obtained from the Choi-Jamio{\l}kowski
representation of a given mapping.

While the two semidefinite programming formulations are different,
they are closely related.
As for the semidefinite programs for the fidelity and the completely
bounded trace norm in the previous section,
Lemma~\ref{lemma:block-positive} provides a key tool through which the
semidefinite program given in this section may be analyzed.

\subsection{Choi-Jamio{\l}kowski representations and the completely
  bounded trace norm}

Let $\X = \complex^n$ and $\Y = \complex^m$, and assume that
$\Phi\in\Trans(\X,\Y)$ is a given mapping.
The \emph{Choi-Jamio{\l}kowski representation} of $\Phi$ is the
operator $J(\Phi) \in \Lin(\Y\otimes\X)$ defined as
\[
J(\Phi) = \sum_{1\leq i,j \leq n} \Phi(E_{i,j}) \otimes E_{i,j}.
\]
An equivalent expression is
\[
J(\Phi) = (\Phi\otimes\I_{\Lin(\X)})(\op{vec}(\I_{\X})
\op{vec}(\I_{\X})^{\ast}),
\]
where the vec-mapping is the linear mapping defined by the action
\[
\vec(E_{i,j}) = e_i \otimes e_j,
\]
extended by linearity to arbitrary operators.

One identity connecting the vec-mapping to the Choi-Jamio{\l}kowski
representation of a mapping is the following one, which holds for all
choices of $A,B\in\Lin(\X)$:
\begin{equation}
  \label{eq:Choi-evaluation}
  \bigl( \I_{\Y} \otimes A^{\t}\bigr)
  J(\Phi) \bigl( \I_{\Y} \otimes \overline{B}\bigr)
  =
  \bigl(\Phi \otimes \I_{\Lin(\X)}\bigr)
  \bigl(\op{vec}(A) \op{vec}(B)^{\ast}\bigr).
\end{equation}
Through this identity, an alternate expression for the completely
bounded trace norm is obtained, as stated by the following theorem.

\begin{theorem}
  \label{theorem:CB-trace-norm-Choi}
  Let $\X = \complex^n$ and $\Y = \complex^m$, and let
  $\Phi\in\Trans(\X,\Y)$ be a linear mapping.
  It holds that
  \[
  \triplenorm{\Phi}_1 = 
  \max\Bigl\{
  \bignorm{
    \bigl(\,\I_{\Y}\otimes\sqrt{\rho_0}\,\bigr)J(\Phi)
    \bigl(\,\I_{\Y}\otimes\sqrt{\rho_1}\,\bigr)}_1\,:\,
  \rho_0,\rho_1\in\Density(\X)\Bigr\}.
\]
\end{theorem}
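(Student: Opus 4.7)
The plan is to combine the rank-one characterization~\eqref{eq:CB-trace-norm-rank-one} with the vec/Choi identity~\eqref{eq:Choi-evaluation}, and then use polar decomposition plus unitary invariance to turn arbitrary operators into positive-semidefinite ones. Concretely, I would parameterize unit vectors $u,v\in\X\otimes\X$ as $u=\vec(A)$ and $v=\vec(B)$ for $A,B\in\Lin(\X)$, using $\|\vec(C)\|=\|C\|_2$ to recast the unit-norm condition as $\|A\|_2=\|B\|_2=1$. Applying~\eqref{eq:Choi-evaluation} to $uv^{\ast}=\vec(A)\vec(B)^{\ast}$ rewrites~\eqref{eq:CB-trace-norm-rank-one} as
\[
\triplenorm{\Phi}_1=\max_{\|A\|_2=\|B\|_2=1}\bignorm{(\I_\Y\otimes A^\t)J(\Phi)(\I_\Y\otimes\overline{B})}_1,
\]
so it only remains to replace $A^\t$ and $\overline{B}$ by $\sqrt{\rho_0}$ and $\sqrt{\rho_1}$ for density operators $\rho_0,\rho_1$.

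For the $\geq$ inequality, given arbitrary $\rho_0,\rho_1\in\Density(\X)$, I would simply take $A=\sqrt{\overline{\rho_0}}$ and $B=\sqrt{\overline{\rho_1}}$; these have unit Frobenius norm and satisfy $A^\t=\sqrt{\rho_0}$ and $\overline{B}=\sqrt{\rho_1}$. For the $\leq$ inequality I would apply polar decomposition on each side, writing $A^\t=U\sqrt{\rho_0}$ and $\overline{B}=\sqrt{\rho_1}\,V$ for unitaries $U,V\in\Unitary(\X)$ and the positive semidefinite operators $\rho_0=\overline{A}A^\t=(A^\t)^{\ast}A^\t$ and $\rho_1=\overline{B}B^\t=\overline{B}\,\overline{B}^{\ast}$. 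A direct check gives $\tr(\rho_0)=\|A\|_2^2=1$ and $\tr(\rho_1)=\|B\|_2^2=1$, so both are indeed density operators. Substituting these factorizations,
\[
(\I_\Y\otimes A^\t)J(\Phi)(\I_\Y\otimes\overline{B})=(\I_\Y\otimes U)(\I_\Y\otimes\sqrt{\rho_0})J(\Phi)(\I_\Y\otimes\sqrt{\rho_1})(\I_\Y\otimes V),
\]
and the outer unitary factors $\I_\Y\otimes U$ and $\I_\Y\otimes V$ are discarded by unitary invariance of the trace norm.

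I do not foresee a serious obstacle; the only delicate point is the bookkeeping with transposes and complex conjugates, which dictates that one polar decomposition must place its unitary on the left (for $A^\t$) while the other places its unitary on the right (for $\overline{B}$), so that in the resulting factorization both unitaries end up on the outside of $J(\Phi)$ and can be peeled off. After this choice is made correctly, the whole argument reduces to direct appeals to~\eqref{eq:CB-trace-norm-rank-one}, \eqref{eq:Choi-evaluation}, and the identity $\|\vec(C)\|=\|C\|_2$.
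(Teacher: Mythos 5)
Your proposal is correct and follows essentially the same route as the paper: rewrite the rank-one characterization \eqref{eq:CB-trace-norm-rank-one} via the Choi identity \eqref{eq:Choi-evaluation} and the correspondence $u=\vec(A)$ with $\norm{u}=\norm{A}_2$, then use polar decompositions of $A^{\t}$ and $\overline{B}$ together with unitary invariance of the trace norm. You simply spell out in more detail (correctly) the placement of the unitaries and the verification that $\rho_0,\rho_1$ are density operators, which the paper leaves implicit.
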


\begin{proof}
  By \eqref{eq:CB-trace-norm-rank-one} together with
  \eqref{eq:Choi-evaluation} it holds that
  \[
  \triplenorm{\Phi}_1
  = \max
  \bigl\{
  \bignorm{
    \bigl( \I_{\Y} \otimes A^{\t}\bigr)
    J(\Phi) \bigl( \I_{\Y} \otimes \overline{B}\bigr)}_1\,:\,
  A,B\in\Lin(\X),\,\norm{A}_2 = \norm{B}_2 = 1\bigr\}.
  \]
  By the polar decomposition, every operator $X\in\Lin(\X)$ with
  $\norm{X}_2 = 1$ may be written as $X = \sqrt{\sigma} U$ for some
  choice of $\sigma\in\Density(\X)$ and $U\in\Unitary(\X)$.
  By the unitary invariance of the trace norm, the theorem follows.
\end{proof}

\subsection{A semidefinite program from
  Theorem~\ref{theorem:CB-trace-norm-Choi}}

The semidefinite program to be considered is specified by the triple
$(\Xi,C,D)$, where
\[
\Xi\in\Trans(\X \oplus \X \oplus (\Y\otimes\X) \oplus (\Y\otimes\X),
\complex \oplus \complex \oplus (\Y\otimes\X) \oplus (\Y\otimes\X))
\]
is a Hermiticity-preserving mapping defined as
\begin{equation}
  \Xi\begin{pmatrix}
  X_0 & \cdot & \cdot & \cdot\\
  \cdot & X_1 & \cdot & \cdot\\
  \cdot & \cdot & Z_0 & \cdot\\
  \cdot & \cdot & \cdot & Z_1
  \end{pmatrix}
  =
  \begin{pmatrix}
    \tr(X_0) & 0 & 0 & 0\\
    0 & \tr(X_1) & 0 & 0\\
    0 & 0 & Z_0 - \I_{\Y} \otimes X_0 & 0\\
    0 & 0 & 0 & Z_1 - \I_{\Y}\otimes X_1
  \end{pmatrix}
\end{equation}
and $C\in\Herm(\X\oplus\X\oplus(\Y\otimes\X)\oplus(\Y\otimes\X))$ and
$D\in\Herm(\complex\oplus\complex\oplus(\Y\otimes\X)\oplus(\Y\otimes\X))$
are defined as 
\begin{equation}
  C = \frac{1}{2}\begin{pmatrix}
    0 & 0 & 0 & 0\\
    0 & 0 & 0 & 0\\
    0 & 0 & 0 & J(\Phi)\\
    0 & 0 & J(\Phi)^{\ast} & 0
  \end{pmatrix}
\qquad\text{and}\qquad
  D = \begin{pmatrix}
    1 & 0 & 0 & 0\\
    0 & 1 & 0 & 0\\
    0 & 0 & 0 & 0\\
    0 & 0 & 0 & 0
  \end{pmatrix}.
\end{equation}
The adjoint of the mapping $\Xi$ is given by
\[
\Xi^{\ast}
\begin{pmatrix}
  \lambda_0 & \cdot & \cdot & \cdot\\
  \cdot & \lambda_1 & \cdot & \cdot\\
  \cdot & \cdot & Y_0 & \cdot\\
  \cdot & \cdot & \cdot & Y_1
\end{pmatrix}
=
\begin{pmatrix}
  \lambda_0 \I_{\X} - \tr_{\Y}(Y_0) & 0 & 0 & 0\\
  0 & \lambda_1 \I_{\X} - \tr_{\Y}(Y_1) & 0 & 0\\
  0 & 0 & Y_0 & 0\\
  0 & 0 & 0 & Y_1
\end{pmatrix}.
\]
After a simplification of the primal and dual problems associated with
$(\Xi,C,D)$, one obtains equivalent primal and dual problems as follows:
\begin{center}
  \begin{minipage}[t]{3in}
    \centerline{\underline{Primal problem}}\vspace{-4mm}
    \begin{align*}
    \text{maximize:}\quad &
    \frac{1}{2}\ip{J(\Phi)}{X} + \frac{1}{2}\ip{J(\Phi)^{\ast}}{X^{\ast}}
    \\[2mm]
    \text{subject to:}\quad & 
    \begin{pmatrix}
      \I_{\Y} \otimes \rho_0 & X\\[2mm]
      X^{\ast} & \I_{\Y}\otimes \rho_1
    \end{pmatrix}
    \geq 0\\
    & \rho_0,\rho_1\in\Density(\X)\\
    & X\in\Lin(\Y\otimes\X)
    \end{align*}
  \end{minipage}
  \qquad
  \begin{minipage}[t]{3in}
    \centerline{\underline{Dual problem}}\vspace{-4mm}
    \begin{align*}
    \text{minimize:}\quad & 
    \frac{1}{2} \bignorm{\tr_{\Y}(Y_0)}_{\infty}
    + \frac{1}{2} \bignorm{\tr_{\Y}(Y_1)}_{\infty}\\[2mm]
    \text{subject to:}\quad &
    \begin{pmatrix}
      Y_0 & -J(\Phi)\\[2mm]
      -J(\Phi)^{\ast} & Y_1
    \end{pmatrix}
    \geq 0\\
    & Y_0, Y_1 \in \Pos(\Y\otimes\X)
    \end{align*}
  \end{minipage}
\end{center}

\subsubsection*{Strong duality}

Similar to the semidefinite programs discussed in the previous section,
strong duality is easily established for the semidefinite program
described above by the use of Slater's theorem.
In fact, strict primal and strict dual feasibility hold for all
choices of $\Phi$; so that, in addition to strong duality, the primal
and dual optima are achieved by feasible solutions in both cases.
An example of a strictly feasible primal solution is
\[
\begin{pmatrix}
  X & 0 & 0 & 0\\
  0 & X & 0 & 0\\
  0 & 0 & Z & 0\\
  0 & 0 & 0 & Z
  \end{pmatrix}
\]
for
\[
X = \frac{\I_{\X}}{\dim(\X)}
\qquad\text{and}\qquad
Z = \frac{\I_{\Y}\otimes\I_{\X}}{\dim(\X)},
\]
while an example of a strictly feasible dual solution is
\[
\begin{pmatrix}
  \lambda & 0 & 0 & 0\\
  0 & \lambda & 0 & 0\\
  0 & 0 & Y & 0\\
  0 & 0 & 0 & Y
\end{pmatrix}
\]
for
\[
Y = \biggl(\frac{\norm{J(\Phi)}_{\infty}}{2} + 1\biggr)\, \I_{\Y\otimes\X}
\qquad\text{and}\qquad
\lambda 
= 1 + \biggl(\frac{\norm{J(\Phi)}_{\infty}}{2} + 1\biggr)\, \dim(\Y).
\]

\subsubsection*{Optimal value}

For any choice of density operators
$\rho_0,\rho_1\in\Density(\X)$, it holds that
\begin{equation}
  \begin{pmatrix}
    \I_{\Y} \otimes \rho_0 & X\\[2mm]
    X^{\ast} & \I_{\Y}\otimes \rho_1
  \end{pmatrix}
  \geq 0
\end{equation}
if and only if
\begin{equation}
  X = \bigl(\,\I_{\Y}\otimes\sqrt{\rho_0}\,\bigr) K
  \bigl(\,\I_{\Y}\otimes\sqrt{\rho_1}\,\bigr)
\end{equation}
for some choice of an operator $K\in\Lin(\Y\otimes\X)$ satisfying
$\norm{K}_{\infty} \leq 1$, as follows from Lemma~\ref{lemma:block-positive}.
The primal optimum is therefore given by
\begin{multline*}
  \quad
  \sup_{K,\rho_0,\rho_1}
  \Re\Bigl(
  \bigip{J(\Phi)}{\bigl(\,\I_{\Y}\otimes\sqrt{\rho_0}\,\bigr) K
    \bigl(\,\I_{\Y}\otimes\sqrt{\rho_1}\,\bigr)}\Bigr)
  \biggr)
  = 
  \sup_{\rho_0,\rho_1}
  \Bignorm{
    \bigl(\,\I_{\Y}\otimes\sqrt{\rho_1}\,\bigr)J(\Phi)^{\ast}
    \bigl(\,\I_{\Y}\otimes\sqrt{\rho_0}\,\bigr)}_1\\
  = 
  \sup_{\rho_0,\rho_1}
  \Bignorm{
    \bigl(\,\I_{\Y}\otimes\sqrt{\rho_0}\,\bigr)J(\Phi)
    \bigl(\,\I_{\Y}\otimes\sqrt{\rho_1}\,\bigr)}_1
  = \triplenorm{\Phi}_1,
  \quad
\end{multline*}
where supremums are taken over all $K\in\Lin(\X)$ with
$\norm{K}_{\infty} \leq 1$ and $\rho_0,\rho_1\in\Density(\X)$, and
where the last equality follows from
Theorem~\ref{theorem:CB-trace-norm-Choi}.

\section{Remarks on the complexity of approximating optimal solutions
  to the semidefinite programs}

Suppose that $(\Xi,C,D)$ is an instance of one of the semidefinite
programs described above, either for the maximum output fidelity
characterization or the Choi-Jamio{\l}kowski representation
characterization of the completely bounded trace norm.
It is natural to ask whether an approximation to the optimal value of
this semidefinite program can be efficiently computed (under the
assumption, let us say, that the complex numbers specifying $\Xi$,
$C$, and $D$ have rational real and imaginary parts whose numerators
and denominators are represented as integers in binary notation).

From a practical viewpoint, algorithms employing
\emph{interior point methods} represent a sensible approach for
computing the optimum value of these semidefinite programs
\cite{Alizadeh95,deKlerk02}.
The CVX software package \cite{GrantB09} for the MATLAB numerical
computing environment allows one to solve these semidefinite programs
efficiently with minimal coding requirements.

For the sake of obtaining rigorous statements about the
polynomial-time solvability of the semidefinite programs
(and perhaps not much more than that), the \emph{ellipsoid method} is
a more attractive alternative, applied specifically to the dual
formulations of the semidefinite programs.
When considering the applicability of the ellipsoid method, it is
helpful to consider the following set, for $\D\subseteq\Herm(\Y)$
denoting the dual feasible set of $(\Xi,C,D)$ and $\varepsilon>0$
being a positive real number:
\[
\D^{\circ}_{\varepsilon} = \bigl\{Y\in\Herm(\Y)\,:\,
Y + H \in \D\;\text{for all $H\in\Herm(\Y)$ satisfying 
  $\norm{H}_2\leq\varepsilon$}\bigr\}.
\]
Intuitively speaking, $\D_{\varepsilon}^{\circ}$ contains every
operator in the interior of the dual feasible set that is not too
close to the boundary of that set.

It has already been demonstrated that $\D^{\circ}_{\varepsilon}$ is
nonempty for some choice of $\varepsilon$ for each of the semidefinite
programs, in the discussions of strong duality in the two previous
sections.
To argue that accurate approximate solutions to the semidefinite
programs can be obtained by the ellipsoid method, a sufficiently large
lower bounds on the value of $\varepsilon$ for which
$\D^{\circ}_{\varepsilon}$ is nonempty is needed.

For the semidefinite program for the maximum output fidelity
characterization of the completely bounded trace norm, presented in
Section~\ref{sec:max-output-fidelity}, the adjoint of the mapping
$\Xi$ is given by 
\[
\Xi^{\ast}
\begin{pmatrix}
  \lambda_0 & \cdot & \cdot & \cdot\\
  \cdot & \lambda_1 & \cdot & \cdot\\
  \cdot & \cdot & Y_0 & \cdot\\
  \cdot & \cdot & \cdot & Y_1
\end{pmatrix}
=
\begin{pmatrix}
  \lambda_0 \I_{\X} - \Psi_0^{\ast}(Y_0) & 0 & 0 & 0\\
  0 & \lambda_1 \I_{\X} - \Psi_1^{\ast}(Y_1) & 0 & 0\\
  0 & 0 & Y_0 & 0\\
  0 & 0 & 0 & Y_1
\end{pmatrix}.
\]
The operator
\[
\begin{pmatrix}
  \lambda_0 & 0 & 0 & 0\\
  0 & \lambda_1 & 0 & 0\\
  0 & 0 & \I & 0\\
  0 & 0 & 0 & \I
\end{pmatrix}
\]
for
\[
\lambda_0 = \frac{1}{2} + \norm{\Psi_0^{\ast}(\I)}_{\infty}
\qquad\text{and}\qquad
\lambda_1 = \frac{1}{2} + \norm{\Psi_1^{\ast}(\I)}_{\infty}
\]
is a specific example of a strictly dual feasible solution satisfying
\[
\Xi^{\ast}
\begin{pmatrix}
  \lambda_0 & \cdot & \cdot & \cdot\\
  \cdot & \lambda_1 & \cdot & \cdot\\
  \cdot & \cdot & \I & \cdot\\
  \cdot & \cdot & \cdot & \I
\end{pmatrix}
- \frac{1}{2}
\begin{pmatrix}
  0 & 0 & 0 & 0\\
  0 & 0 & 0 & 0\\
  0 & 0 & 0 & \I\\
  0 & 0 & \I & 0
\end{pmatrix}
\geq \frac{1}{2}
\begin{pmatrix}
  \I & 0 & 0 & 0\\
  0 & \I & 0 & 0\\
  0 & 0 & \I & 0\\
  0 & 0 & 0 & \I
\end{pmatrix}
\]
A calculation reveals that for
$H\in\Herm(\complex\oplus\complex\oplus\Z\oplus\Z)$ satisfying
\[
\norm{H}_2 \leq \frac{1}{4}\min\Bigl\{
\norm{\Psi^{\ast}_0}_{\infty}^{-1},\:
\norm{\Psi^{\ast}_1}_{\infty}^{-1},\:
1
\Bigr\}
\]
it holds that
$\norm{\Xi^{\ast}(H)}_{\infty} \leq 1/2$.
As $\Psi_0^{\ast}$ and $\Psi_1^{\ast}$ are positive, it holds that
$\norm{\Psi_0^{\ast}}_{\infty} = \norm{\Psi_0^{\ast}(\I)}_{\infty}$
and $\norm{\Psi_1^{\ast}}_{\infty} = \norm{\Psi_1^{\ast}(\I)}_{\infty}$, 
from which it follows that $\D^{\circ}_{\varepsilon}$ is nonempty for
\[
\varepsilon = 
\frac{1}{4 \bigl(1 + \norm{\Psi_0^{\ast}(\I)}_{\infty} +
\norm{\Psi_1^{\ast}(\I)}_{\infty}\bigr)}.
\]

For the semidefinite program for the completely bounded trace norm
presented in Section~\ref{sec:Choi}, based on the Choi-Jamio{\l}kowski
representation, the adjoint of the mapping $\Xi$ is given by
\[
\Xi^{\ast}
\begin{pmatrix}
  \lambda_0 & \cdot & \cdot & \cdot\\
  \cdot & \lambda_1 & \cdot & \cdot\\
  \cdot & \cdot & Y_0 & \cdot\\
  \cdot & \cdot & \cdot & Y_1
\end{pmatrix}
=
\begin{pmatrix}
  \lambda_0 \I_{\X} - \tr_{\Y}(Y_0) & 0 & 0 & 0\\
  0 & \lambda_1 \I_{\X} - \tr_{\Y}(Y_1) & 0 & 0\\
  0 & 0 & Y_0 & 0\\
  0 & 0 & 0 & Y_1
\end{pmatrix}.
\]
The operator
\[
\begin{pmatrix}
  \lambda & 0 & 0 & 0\\
  0 & \lambda & 0 & 0\\
  0 & 0 & Y & 0\\
  0 & 0 & 0 & Y
\end{pmatrix}
\]
for
\[
Y = \biggl( \frac{\norm{J(\Phi)}_{\infty}}{2} + 1 \biggr) \I_{\Y\otimes\X},
\qquad\text{and}\qquad
\lambda = 1 + 
\biggl( \frac{\norm{J(\Phi)}_{\infty}}{2} + 1 \biggr) \dim(\Y)
\]
is an example of a strictly dual feasible solution satisfying
\[
\Xi^{\ast}
\begin{pmatrix}
  \lambda & 0 & 0 & 0\\
  0 & \lambda & 0 & 0\\
  0 & 0 & Y & 0\\
  0 & 0 & 0 & Y
\end{pmatrix}
- \frac{1}{2}
\begin{pmatrix}
  0 & 0 & 0 & 0\\
  0 & 0 & 0 & 0\\
  0 & 0 & 0 & J(\Phi)\\
  0 & 0 & J(\Phi)^{\ast} & 0
\end{pmatrix}
\geq
\begin{pmatrix}
  \I & 0 & 0 & 0\\
  0 & \I & 0 & 0\\
  0 & 0 & \I & 0\\
  0 & 0 & 0 & \I
\end{pmatrix}.
\]
For
$H\in\Herm(\complex\oplus\complex\oplus(\Y\otimes\X)\oplus(\Y\otimes\X))$
satisfying
\[
\norm{H}_2 \leq \frac{1}{2\dim(\Y)}
\]
it holds that
$\norm{\Xi^{\ast}(H)}_{\infty} \leq 1$,
from which it follows that $\D^{\circ}_{\varepsilon}$ is nonempty for
\[
\varepsilon = \frac{1}{2\dim(\Y)}.
\]

In both cases, the lower bound on the value of $\varepsilon$ for which
$\D^{\circ}_{\varepsilon}$ is nonempty is polynomial in the input data
and efficiently computable.

One also requires an upper bound on the size of an optimal, or near
optimal, dual feasible solution.
For the semidefinite program based on the maximum output fidelity
characterization of the completely bounded trace norm,
every dual feasible solution is positive semidefinite, and
for approximate solutions it is sufficient to consider only those dual
feasible solutions whose trace is at most
\[
R = \norm{\Psi_0^{\ast}(\I)}_{\infty}
+ \norm{\Psi_1^{\ast}(\I)}_{\infty} + 2\dim(\Z).
\]
For the semidefinite program for the Choi-Jamio{\l}kowski
representation characterization of the completely bounded trace norm,
every dual feasible solution is again positive semidefinite, and an
optimal solution cannot have trace larger than
\[
R = 2 \norm{J(\Phi)}_{\infty} \dim(\X)\dim(\Y).
\]
The trace of every positive semidefinite operator serves as an upper
bound on that operator's Frobenius norm, which implies that the above
quantities also upper-bound the Frobenius norm of the set of dual
feasible solutions that are worthy of consideration.

As is described in detail in \cite{GrotschelLS93} for a significantly
more general setting, and summarized in \cite{Lovasz03} for the
semidefinite programming setting, the bounds $\varepsilon$
and $R$ above allow one to conclude that an algorithm running in time
polynomial in the input size and $\log(1/\delta)$ can approximate the
optimal value of the semidefinite programs discussed above to within
accuracy $\delta$.

\subsection*{Acknowledgments}

Thanks to Gus Gutoski for helpful comments on this paper.
This research was supported by Canada's NSERC and the Canadian
Institute for Advanced Research (CIFAR).

\bibliographystyle{alpha}

\end{document}